\newcommand{\pr}{\ensuremath{\mathrm{Pr}}}
\newcommand{\bn}{\ensuremath{\mathcal{B}}}
\newcommand{\bnpr}{\ensuremath{\mathcal{B}'}}
\newcommand{\yes}{\textit{Yes}}
\newcommand{\no}{\textit{No}}
\newcommand{\mprob}[1] {\pr(#1)}
\newcommand{\cprob}[2] {\pr(#1\;|\;#2)}
\newcommand{\apr}{\ensuremath{\mathit{approxsol}_{\bn}}}
\newcommand{\opt}{\ensuremath{\mathit{optsol}_{\bn}}}
\newcommand{\mopt}{\ensuremath{\mathit{optsol}_{\bn}^{1 \dots m}}}
\newcommand{\can}{\ensuremath{\mathit{cansol}_{\bn}}}
\def\pmap{\textsc{Partial MAP}}
\def\spmap{\textsc{Similar Partial MAP}}
\def\dpmap{\textsc{Dissimilar Partial MAP}}
\def\majsat{\textsc{MajSAT}}
\newcommand{\pt}{\ensuremath{\mathsf{P}}} 
\newcommand{\np}{\ensuremath{\mathsf{NP}}}
\newcommand{\pp}{\ensuremath{\mathsf{PP}}}
\newcommand{\nppp}{\ensuremath{\mathsf{NP^{\mathsf{PP}}}}}
\newcommand{\fppppp}{\ensuremath{\mathsf{FP^{PP^{\mathsf{PP}}}}}}
\def\makefig#1#2#3#4{\begin{figure}[h!] 
   \centering
   \includegraphics[width=#3]{#1.#4}
   \caption{#2}
   \label{#1}
   \rule{\columnwidth}{0.3mm}
   \vspace{1mm}
   \end{figure}}
\def\fproblem#1#2#3{
\begin{flushleft} 
  \noindent 
  {\sc #1}\\
  {\bf Instance: }#2.\\
  {\bf Output: }#3.\\
\end{flushleft}}
\begin{document}
\title{Finding dissimilar explanations in\\Bayesian networks: Complexity results}
%
%
\author{Johan Kwisthout\orcidID{0000-0003-4383-7786}}
\authorrunning{J. Kwisthout}
%
\institute{Donders Institute for Brain, Cognition and Behaviour\\
Radboud University Nijmegen,\\
PO Box 9104, 6500HE Nijmegen, The Netherlands\\
\email{j.kwisthout@donders.ru.nl}\\
\url{http://www.socsci.ru.nl/johank}}
\maketitle              
\begin{abstract}
Finding the most probable explanation for observed variables in a Bayesian network is a notoriously intractable problem, particularly if there are hidden variables in the network. In this paper we examine the complexity of a related problem, that is, the problem of finding a set of {\em sufficiently dissimilar}, yet all plausible, explanations. Applications of this problem are, e.g., in search query results (you won't want 10 results that all link to the same website) or in decision support systems. We show that the problem of finding a `good enough' explanation that differs in structure from the best explanation is at least as hard as finding the best explanation itself.

\keywords{Bayesian networks \and MAP explanations \and Computational complexity}
\end{abstract}
\section{Introduction}

A vital computational problem within probabilistic graphical models such as Bayesian networks is the problem of finding the {\em mode} or {\em most probable explanation} of a set of variables given observed values for other variables in the network. When the network includes latent or hidden variables (i.e., variables that have neither been observed nor are of interest for the explanation) this problem is known as \pmap; the explanation sought is the MAP explanation, i.e., the joint value assignment to the explanation variables that has maximum posterior probability. In this paper we are interested in the problem of finding not specifically the MAP explanation, but {\em other} explanations that have desirable properties. There are two distinct reasons why we may be interested in alternative explanations:

\begin{enumerate}
\item As a means of approximating the MAP explanation. \pmap\ is a highly intractable problem \cite{Campos11,Kwisthout11} and we may find an acceptable approximation thereof (to be further explicated later) `good enough';
\item To obtain `alternative good' explanations {\em in addition to} the MAP explanation, to allow us to explore several alternatives (e.g. search results) or to cover for a set of likely explanations (e.g. medical conditions).
\end{enumerate}

`Good enough' and `alternative good' explanations are conceptually different when we look at the {\em structure} of the explanation as compared to the MAP explanation; that is, how similar or dissimilar the joint value assignments (of the MAP explanation and the alternative explanation) are. In the first case we are more than happy to obtain an explanation that is almost identical to the MAP explanation; in fact, in some problem domains this might even be a prerequisite. For example, in computational cognitive modeling an explanation that does not resemble the MAP explanation would not be acceptable as a valid approximation of the MAP explanation, even if it has a comparable probability. In the second case, our goal is to find alternative explanations that are {\em structurally different} yet are still plausible. For example, in response to a search query we don't want to end up with ten results that refer to minor variations of essentially the same web-page, even if they happen to be the most probable given the query.

Note that {\em structure} approximation (where we seek an explanation with Hamming distance at most $d$ of the MAP explanation) is really different from {\em value} approximation (where we seek an explanation with almost-as-high probability, e.g. within ratio $r$ of the MAP explanation) and from {\em rank} approximation (where we seek an explanation that is within the $m$ best explanations) \cite{Kwisthout15}. These notions are really orthogonal, as Figure \ref{map_structs} will reveal. 

\makefig{map_structs}
{Graphical depiction of possible relationships between similarity and probability of explanations, with on the X-axis defines an order of the explanations according to their similarity to the best explanation. In ``linear decrease'' the structure and the probability correlate almost completely; in ``random noise'' almost not at all. There can be a sole peak of probability mass (``lonely mountain'') with all other explanations having almost zero probability. Alternatively, there can be several structurally distinct ``local maxima'' of probability mass. Particularly in the latter case it might be interesting to look at dissimilar explanations that have a relatively high probability.}
{12cm}{jpg}

Current algorithms that seek to find alternative explanations by exploring local maxima (e.g. \cite{Chen13,Felzenszwalb10,Chen18}) may fare well if the explanatory landscape is as in the ``local maxima'' panel of Figure \ref{map_structs}. They might not find good explanations, or take a lot of time, if the landscape is in either of the other panels of Figure \ref{map_structs}. However, the computational complexity of this problem has not yet been investigated. In this paper we are specifically interested in explanations that rank well (are within the best $m$ explanations) {\em and} that are either structurally similar or dissimilar; that is, we complement the results of \cite{Kwisthout15}. In this paper we will further explicate both problems and explore the computational complexity of both of them. We will start with offering some necessary preliminaries and sharing our notational conventions. In Section \ref{Main_results} we will formalize both problems (in several variants) and show that both of them are at least as hard as Partial MAP itself. We will also further elaborate on the exact complexity of decision versions of both problems which turns out to be non-trivial. We conclude in Section \ref{Conclusion}. 

\section{Preliminaries}
\label{Preliminaries}

In this section we introduce our notational conventions. Specifically we will cover Bayesian networks, the complexity classes \pp\ and \nppp, one-Turing reductions, and formal definitions of the approximation notions we will use in the paper. The reader is referred to textbooks like \cite{Darwiche09} (specifically complexity in Bayesian networks) and to \cite{Kwisthout15} (for a formal treatment of MAP approximations) for more background.

A Bayesian network $\bn = (\mathbf{G}_{\mathcal{B}}, \pr)$ is a probabilistic graphical model that succinctly represents a joint probability distribution $\mprob{\mathbf{V}} = \prod_{i=1}^n \cprob{V_i}{\pi(V_i)}$ over a set of discrete random variables $\mathbf{V}$. $\bn$ is defined by a directed acyclic graph $\mathbf{G}_{\mathcal{B}} = (\mathbf{V}, \mathbf{A})$, where $\mathbf{V}$ represents the stochastic variables and $\mathbf{A}$ models the conditional (in)dependences between them, and a set of parameter probabilities $\pr$ in the form of conditional probability tables (CPTs). In our notation $\pi(V_i)$ denotes the set of parents of a node $V_i$ in $\mathbf{G}_{\mathcal{B}}$. We use upper case to indicate variables, lower case to indicate a specific value of a variable, and boldface to indicate sets of variables respectively joint value assignments to such a set. 

One of the key computational problems in Bayesian networks is the problem to find the most probable explanation for a set of observations, i.e., a joint value assignment to a designated set of variables (the explanation set) that has maximum posterior probability given the observed variables (the joint value assignment to the evidence set) in the network. If the network includes variables that are neither observed nor to be explained (referred to as intermediate variables) this problem is typically referred to as \pmap. We use the following formal definition:

\fproblem{\pmap} 
{A Bayesian network $\bn = (\mathbf{G}_{\mathcal{B}}, \pr)$, where $\mathbf{V}(\mathbf{G}_{\mathcal{B}})$ is partitioned into a set of evidence nodes $\mathbf{E}$ with a joint value assignment $\mathbf{e}$, a set of intermediate nodes $\mathbf{I}$, and an explanation set $\mathbf{H}$}
{A joint value assignment $\mathbf{h}$ to $\mathbf{H}$ such that for all joint value assignments $\mathbf{h^{\prime}}$ to $\mathbf{H}$, $\cprob{\mathbf{h}}{\mathbf{e}} \geq \cprob{\mathbf{h^{\prime}}}{\mathbf{e}}$}

The following notation is taken from \cite{Kwisthout15}. For an arbitrary \pmap\ instance $\{\bn,\mathbf{H},\mathbf{E},\mathbf{I},\mathbf{e}\}$, let $\can$ refer to the set of candidate solutions, with $\opt\in\can$ denoting the {\em optimal} solution (or, in case of multiple solutions with the same posterior probability, one of the optimal solutions) to the \pmap\ instance; we will informally refer to this solution as the MAP explanation. When $\can$ is ordered according to the probability of the candidate solutions (breaking ties between candidate solutions with the same probability arbitrarily), then $\mopt$ refers to the set of the first $m$ elements in $\can$, viz. the $m$ most probable solutions to the \pmap\ instance.

We assume that the reader is familiar with standard notions in computational complexity theory, notably the classes \pt\ and \np, \np-hardness, and polynomial time (many-one) reductions. The class \pp\ is the class of decision problems that can be decided by a probabilistic Turing machine in polynomial time; that is, where \yes-instances are accepted with probability strictly larger than $\sfrac{1}{2}$ and \no-instances are accepted with probability no more than $\sfrac{1}{2}$. A problem in \pp\ might be accepted with probability $\sfrac{1}{2} + \epsilon$ where $\epsilon$ may depend exponentially on the input size $n$. Hence, it may take exponential time to increase the probability of acceptance (by repetition of the computation and taking a majority decision) close to $1$. This is consistent with the sampling variant of the Chernoff bound: The number of samples $M$ needed to increase the probability of acceptance of \yes-instances to $1-\delta$ is at least $\frac{\ln(\sfrac{1}{\sqrt{\delta}})}{\epsilon^2}$; when $\epsilon = \sfrac{1}{2^n}$ then $M$ is exponential in the input size. \pp\ hence is a powerful class; we know for example that $\np\subseteq\pp$ and the inclusion is assumed to be strict. The canonical \pp-complete decision problem is \majsat: given a Boolean formula $\phi$, does the majority of truth assignments to its variables satisfy $\phi$?

In computational complexity theory, so-called {\em oracles} are theoretical constructs that increase the power of a specific Turing machine. An oracle (e.g., an oracle for \pp-complete problems) can be seen as a `magic sub-routine' that answers class membership queries (e.g, in \pp) in a single time step. In this paper we are specifically interested in classes defined by non-deterministic Turing machines with access to a \pp-oracle. Such a machine is very powerful, and likewise problems that are complete for the corresponding complexity class \nppp\ are highly intractable. 

A decision variant of \pmap\ is known to be \nppp-complete, even for binary variables, indegree at most $2$, and under the assumption that there exists at least one joint value assignment $\mathbf{h}$ such that $\mprob{\mathbf{h,e}} > 0$ \cite{Park04,Kwisthout11a}. In the intractability proofs in Section \ref{Main_results} we will assume, without loss of generality, that these constraints hold for all Bayesian networks $\bn$ under consideration. As we use reductions from function problems, not decision problems, our reductions are formally polynomial-time one-Turing reductions. A function $f$ one-Turing reduces to $g$ (notation $f \leq^{\mathsf{FP}}_{1-T} g$) if there are functions $t_1$ and $t_2$ such that for all $x, f(x) = t_1(x, g(x, t_2(x)))$ \cite[p.5]{Toda94}.

We finish this section be repeating the following formal definition of rank-approximation of \pmap\ from \cite{Kwisthout15}; we will build on this definition in the next section.

\begin{definition}[rank-approximation of \pmap]
Let $\mopt\subseteq\can$ be the set of the $m$ most probable solutions to a \pmap\ problem and let $\opt$ be the optimal solution. An explanation $\apr\in\can$ is defined to $m$-rank-approximate $\opt$ if $\apr\in\mopt$.
\end{definition}


\section{Main results}
\label{Main_results}

Let $d_H$ be the Hamming distance between two joint value assignments. We define the following two problem variants to \pmap, where $m$ and $d$ are arbitrary constants:

\fproblem{$(m,d)$-\spmap} 
{As in \pmap}
{An explanation $\apr\in\can$ that $m$-rank-approximates $\opt$ and where $d_H(\apr, \opt) \leq d$, or special symbol $\emptyset$ if no such explanation exists}

\fproblem{$(m,d)$-\dpmap} 
{As in \pmap}
{An explanation $\apr\in\can$ that $m$-rank-approximates $\opt$ and where $d_H(\apr, \opt) \geq d$, or special symbol $\emptyset$ if no such explanation exists}

We will prove that both problems are \nppp-hard by reduction from \pmap, even for binary variables with indegree at most $2$. We will start with the construction for $(m,d)$-\spmap\ without the latter constraints and prove \nppp-hardness (Figure \ref{construct}, panel a), and then adapt it to contain only binary variables and indegree at most $2$ (panel b). Then we show how this construction can be extended to prove \nppp-hardness of $(m,d)$-\dpmap\ (panel c). 

\makefig{construct}
{Nodes added to \pmap\ network \bn. a) basic construct; b) construct with binary nodes only; c) assuming at least $d$ different nodes.}
{10cm}{pdf}



\begin{theorem}
$(m,d)$-\spmap\ is \nppp-hard.
\label{spmap_thm}
\end{theorem}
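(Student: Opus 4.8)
The plan is to reduce from \pmap\ itself, exploiting the fact that a decision variant of \pmap\ is already \nppp-complete. Since we are reducing a function problem to a function problem, the reduction must be a polynomial-time one-Turing reduction: given a \pmap\ instance $\{\bn, \mathbf{H}, \mathbf{E}, \mathbf{I}, \mathbf{e}\}$, I would construct (via $t_2$) a modified instance $\bnpr$ of $(m,d)$-\spmap\ whose solution allows me to recover (via $t_1$) the MAP explanation \opt\ of the original instance. The governing idea is to augment \bn\ with a small gadget (Figure \ref{construct}, panel a) that deterministically \emph{copies} the explanation variables \mathbf{H}, so that the new explanation set consists of both the original variables and their copies. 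The copies are wired so that the copy block duplicates the value of \mathbf{H} exactly; thus any candidate solution naturally has twice the ``spread'' in Hamming distance, and the added copies inflate the Hamming distance between distinct assignments without changing their relative probabilities.

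The key steps, in order, are as follows. First I would define the gadget that attaches to \bn: for each explanation variable $H_i \in \mathbf{H}$ I add a child node $H_i'$ whose CPT makes it a deterministic copy of $H_i$ (so $\cprob{h_i'}{h_i} = 1$ iff $h_i' = h_i$), and I place all the $H_i'$ into the new explanation set $\mathbf{H}' = \mathbf{H} \cup \{H_i'\}$, leaving $\mathbf{E}, \mathbf{e}, \mathbf{I}$ otherwise untouched. Second, I would verify that this leaves the probability landscape essentially intact: because each copy is deterministic, the posterior $\cprob{\mathbf{h}'}{\mathbf{e}}$ in $\bnpr$ equals $\cprob{\mathbf{h}}{\mathbf{e}}$ in \bn\ for the consistent assignment (copies matching their originals), and any inconsistent assignment has probability zero. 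Hence the optimal solution and the entire ranking \mopt\ are preserved under the natural bijection, so the $m$-rank-approximation requirement carries over verbatim. Third, I would choose the constant $d$ (and, if needed, pad with additional copy variables) so that the similarity constraint $d_H(\apr, \opt) \leq d$ forces \apr\ to be $\opt$ itself: since distinct consistent assignments differ in at least one original variable and therefore in at least two places (the original and its copy), enough copies make any non-optimal candidate exceed the distance bound, leaving only \opt. The recovery function $t_1$ then simply reads off the original \mathbf{H}-part of the returned explanation.

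For the refinement to binary variables with indegree at most $2$ (panel b), I would replace any multi-valued copy mechanism with a cascade of binary gadgets and route the copying through intermediate nodes so that no node acquires more than two parents, checking that determinism and the probability-preservation argument survive the encoding; since the underlying \pmap\ decision problem is already \nppp-complete under exactly these restrictions, this refinement suffices to conclude \nppp-hardness.

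I expect the main obstacle to be calibrating the gadget so that the similarity constraint pins down \emph{exactly} the optimal solution rather than merely a small neighborhood of it, while simultaneously respecting the requirement that $m$ and $d$ are fixed constants independent of the instance. The tension is that one cannot let $d$ grow with the input, so the amplification of Hamming distance must come from a construction whose \emph{per-variable} discrepancy is controlled by the gadget rather than from choosing a large $d$; getting the bookkeeping right so that any deviation from \opt\ is detected within a constant Hamming budget, without inadvertently ruling out \opt\ or admitting a spurious near-optimal competitor, is the delicate part. A secondary subtlety is ensuring the whole argument is a genuine one-Turing reduction (a single oracle call with polynomial-time pre- and post-processing), rather than something that implicitly needs multiple adaptive queries.
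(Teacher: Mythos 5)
Your reduction is sound in outline but takes a genuinely different route from the paper's. The paper adds a single unconnected node $M$ with $m$ uniformly distributed values to the explanation set, so that the $m$ best explanations of \bnpr\ all share the same $\mathbf{H}$-part (the original MAP explanation) and pairwise differ only in $M$; consequently \emph{every} admissible output of the $(m,d)$-\spmap\ oracle projects onto the original \opt, and the similarity bound holds because the distance is $1 \leq d$. You do the opposite: you spread the top-$m$ explanations apart by deterministic copying so that \opt\ becomes the \emph{unique} admissible output (it always satisfies both conditions, so $\emptyset$ is never returned), and then read off the $\mathbf{H}$-part. Both are legitimate single-query one-Turing reductions; the paper's gadget is lighter (one extra node, independent of $d$), while yours pins down \opt\ exactly and transfers to the binary, indegree-$2$ setting with no cascading at all, since each copy has a single parent.

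There is, however, one concrete hole. The step ``leaving only \opt'' presumes that every competitor in \mopt\ is a \emph{consistent} assignment (copies matching originals). But \mopt\ is the set of the first $m$ elements of \can\ under arbitrary tie-breaking, and the paper only assumes that at least \emph{one} joint value assignment has positive posterior. If the original instance has fewer than $m$ explanations with $\cprob{\mathbf{h}}{\mathbf{e}} > 0$, then \mopt\ is padded with zero-probability assignments, among which are inconsistent ones at Hamming distance $1$ from \opt\ (flip a single original $H_i$ while keeping its copies, or vice versa). Such an assignment is then a valid oracle output, and reading off its $\mathbf{H}$-part need not return the MAP explanation. This is repairable within your framework: either restrict to (still \nppp-hard) instances guaranteed to have at least $m$ positive-probability explanations, or use $2d+1$ copies per variable and let $t_1$ take a per-group majority vote, so that any \apr\ with $d_H(\apr,\opt)\leq d$ decodes to the correct $\mathbf{h}$ no matter which positions were corrupted. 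As written, though, the recovery function ``read off the original $\mathbf{H}$-part'' is not justified in this edge case.
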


\begin{proof}
We reduce from the \nppp-hard \pmap\ problem. Let $\{\bn,\mathbf{H},\mathbf{E},\mathbf{I},\mathbf{e}\}$ be an instance to \pmap. From \bn, we create an instance \bnpr\ to $(m,d)$-\spmap\ as follows. To \bn\ we add a singleton, unconnected node $M$ with $m$ uniformly distributed values $m_1 \ldots m_m$. Let $\mathbf{H'} = \mathbf{H} \cup \{M\}$ and observe that the MAP explanation $\mathbf{h}$ to $\mathbf{H}$ in the \pmap\ instance now translates into a set of $m$ best explanations $\mopt = \mathbf{h} \cup \{m_i\} (i = 1 \ldots m)$ with equal probability and by construction all these explanations differ only in the value assignment to $M$. Let $\opt$ be any arbitrary explanation in $\mopt$ and let $apr$ be a distinct explanation in $\mopt$. We have that $\apr\in\mopt$ $m$-rank-approximates $\opt$ and that for any $d \geq 1$, $d_H(\apr, \opt) = 1 \leq d$;  obviously the one-Turing reduction takes polynomial time and hence $(m,d)$-\spmap\ is \nppp-hard.
\end{proof}

Note that we can replace $M$ in this construction by $k = \lceil \log m \rceil$ uniformly distributed binary variables $M_i$ such that $\mathbf{H'} = \mathbf{H} \cup \{M_1 \ldots M_k \}$. A caveat here is that $\mathbf{h}$ may translate to more than $m$ explanations with equal probability and if $\mopt$ and $\opt$ are picked randomly from this set, we may end up\footnote{An example of such a situation would be when $m = 5$ and the solutions with binary encodings $000$, $011$, $101$, $110$, and $111$ would be in $\mopt$, with $\opt = 000$.} with a set that does not contain $\apr$ such that $d_H(\apr, \opt) = 1$. We therefore impose the constraint that $\mopt$ does not arbitrarily break ties, but that it contains the first $m$ explanations according to their lexicographical order, and because of this we can be sure that $\mopt$ contains at least one explanation $\apr$ with $d_H(\apr, \opt) = 1$.

\begin{corollary}
$(m,d)$-\spmap\ is \nppp-hard, even if all nodes are binary and have indegree at most 2.
\end{corollary}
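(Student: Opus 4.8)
The plan is to reuse the reduction of Theorem~\ref{spmap_thm} almost verbatim, changing only the gadget so that the added nodes respect the structural constraints, exactly as anticipated in the remark following the theorem. First I would invoke the fact recalled in Section~\ref{Preliminaries} that \pmap\ remains \nppp-hard when restricted to networks whose variables are binary and have indegree at most $2$, and let $\{\bn,\mathbf{H},\mathbf{E},\mathbf{I},\mathbf{e}\}$ be such an instance. Since every node of \bn\ already meets the bound, it suffices to add gadget nodes that do likewise: I would construct \bnpr\ by attaching $k = \lceil \log m \rceil$ mutually independent, uniformly distributed binary nodes $M_1, \ldots, M_k$, each left unconnected to the remainder of \bn\ (hence of indegree $0$), and set $\mathbf{H'} = \mathbf{H} \cup \{M_1, \ldots, M_k\}$. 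As the $M_i$ carry no incoming or outgoing edges, the joint distribution factorises into the original distribution times the uniform distribution over the $2^k$ assignments to $M_1 \ldots M_k$, so the MAP explanation $\mathbf{h}$ of the original instance lifts to $2^k \geq m$ distinct explanations $\mathbf{h} \cup \mathbf{m}$ of \bnpr, all of the same maximal posterior probability and all agreeing on $\mathbf{H}$.

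The delicate point, and the step I expect to be the \emph{main obstacle}, is controlling which $m$ of these $2^k$ equiprobable extensions are placed in \mopt. As the footnoted example in the remark shows, an adversarial tie-break that picks extensions pairwise at Hamming distance $\geq 2$ could leave \mopt\ with no extension at distance $1$ from \opt, whereupon the $(m,d)$-\spmap\ oracle would legitimately return $\emptyset$ and the reduction would collapse. I would resolve this precisely as the remark prescribes, by fixing the lexicographic order on the assignments to $M_1 \ldots M_k$ as the tie-breaking rule. Then \mopt\ consists of the lexicographically first $m$ extensions and \opt\ is their least element $\mathbf{h} \cup 0^k$; since the two lexicographically smallest extensions $\mathbf{h} \cup 0^k$ and $\mathbf{h} \cup 0^{k-1}1$ differ in exactly the last bit, \mopt\ is guaranteed to contain an \apr\ with $d_H(\apr, \opt) = 1$, which for every $d \geq 1$ satisfies $d_H(\apr, \opt) \leq d$.

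Finally I would close the one-Turing reduction. Having obtained a non-empty answer \apr\ (a joint assignment to $\mathbf{H'}$) from the oracle on \bnpr, I project it onto $\mathbf{H}$ by discarding the values of $M_1, \ldots, M_k$; because every member of \mopt\ agrees with $\mathbf{h}$ on $\mathbf{H}$, this projection returns exactly the MAP explanation of the original \pmap\ instance. Both the construction of \bnpr\ and the projection are computable in polynomial time, and all added nodes are binary with indegree $0 \leq 2$ while the rest of \bn\ was binary with indegree at most $2$, so the constraints of the corollary are met and \nppp-hardness of the restricted problem follows.
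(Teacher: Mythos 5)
Your proposal is correct and follows essentially the same route as the paper: it reduces from \pmap\ restricted to binary variables of indegree at most $2$, replaces the $m$-valued node $M$ by $k = \lceil \log m \rceil$ unconnected uniform binary nodes, and handles the tie-breaking caveat via lexicographic ordering exactly as the remark preceding the corollary prescribes. The only additions beyond the paper's text are the explicit projection step closing the one-Turing reduction and the observation that $\mathbf{h} \cup 0^k$ and $\mathbf{h} \cup 0^{k-1}1$ differ in one bit, both of which are consistent with (and slightly more explicit than) the paper's argument.
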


We now extend the construction in the proof of theorem \ref{spmap_thm} to prove \nppp-hardness of $(m,d)$-\dpmap. To the above constructed network \bnpr\ we add $d-1$ nodes $D_1 \ldots D_{d-1}$ with $m$ uniformly distributed values. $D_1$ has $M$ as sole parent, whereas $D_i, i \geq 2$ has $D_{i-1}$ as sole parent. The probability of $D_i$ is defined to be deterministically dependent on its parent, i.e., $\cprob{D_1 = d_{1,j}}{M = m_j} = 1$ and $\cprob{D_i = d_{i,j}}{D_{i-1} = d_{i-1,j}} = 1$. We set $\mathbf{H'} = \mathbf{H} \cup \{M\} \cup \{D_1 \ldots D_{d-1}\}$. By construction, every explanation in $\mopt$ will differ in $d$ variables and thus any $\apr\in\mopt$ $m$-rank-approximates $\opt$ and $d_H(\apr, \opt) = d$. The construction with $k = \lceil \log m \rceil$ binary variables $M_i$ is similar, but now we add $(d-1)k$ nodes $D_{1,1} \ldots D_{d-1,k}$, which implies that every $\apr\in\mopt$ differs in at least $d$ nodes.

\begin{corollary}
$(m,d)$-\dpmap\ is \nppp-hard, even if all nodes are binary and have indegree at most 2.
\end{corollary}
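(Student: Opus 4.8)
The plan is to reduce from the \nppp-hard \pmap\ problem once more, building directly on the binary encoding used for the $(m,d)$-\spmap\ corollary and amplifying each structural difference by a chain of deterministic copies. Given a \pmap\ instance $\{\bn,\mathbf{H},\mathbf{E},\mathbf{I},\mathbf{e}\}$, I would first construct the binary network with $k=\lceil\log m\rceil$ uniformly distributed binary variables $M_1\ldots M_k$, so that the MAP explanation $\mathbf{h}$ extends into a block of at least $m$ equally probable solutions; under lexicographic tie-breaking these form $\mopt$, and they agree on $\mathbf{H}$ while realising distinct encodings in $\{0,1\}^k$.

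Next I would attach to each encoding bit $M_i$ a chain of $d-1$ fresh binary nodes $D_{1,i}\ldots D_{d-1,i}$, where $D_{1,i}$ has sole parent $M_i$, each $D_{j,i}$ ($j\ge 2$) has sole parent $D_{j-1,i}$, and every node deterministically copies its parent's value. Collecting all of these into $\mathbf{H'}=\mathbf{H}\cup\{M_1\ldots M_k\}\cup\{D_{1,1}\ldots D_{d-1,k}\}$, the key observation is that a deterministic copy contributes a factor of $1$ whenever it is consistent with its parent and $0$ otherwise, so the added chains do not perturb the posterior ranking: $\mopt$ still consists of $m$ maximal-probability solutions, each being an encoding of $\mathbf{h}$ with its copies forced.

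The correctness argument is then short. Any two distinct members of $\mopt$ agree on $\mathbf{H}$ and hence must differ in the encoding, i.e. in at least one bit $M_i$; through its copy chain that single disagreement propagates to all $d$ coordinates $M_i,D_{1,i},\ldots,D_{d-1,i}$, so their Hamming distance is at least $d$. Taking $\opt$ and any distinct $\apr\in\mopt$ (which exists whenever $m\ge 2$, so the solver never returns $\emptyset$), we obtain an $m$-rank-approximation with $d_H(\apr,\opt)\ge d$; restricting $\apr$ to $\mathbf{H}$ yields an optimal MAP solution, since every member of $\mopt$ restricts to an optimal $\mathbf{h}$, completing the one-Turing reduction. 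All new nodes are binary with indegree $1$, the original network has indegree at most $2$, and only $(d-1)k$ nodes are added, so the reduction runs in polynomial time.

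The step I expect to require the most care is justifying that the copy chains leave the probability ordering untouched while simultaneously forcing distance at least $d$: I would want to verify explicitly that amplifying each of the $k$ bits into $d$ identical coordinates can never let a sub-optimal $\mathbf{H}$-assignment enter $\mopt$, and that the minimum Hamming distance between distinct extended encodings is exactly $d$ (attained when they differ in a single bit), so that the $\ge d$ requirement holds for every pair drawn from $\mopt$.
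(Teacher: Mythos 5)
Your proposal is correct and follows essentially the same construction as the paper: attaching deterministic copy chains of length $d-1$ to the encoding variables (adding $(d-1)k$ binary nodes of indegree $1$), so that any two distinct members of $\mopt$, which must differ in at least one bit $M_i$, differ in all $d$ coordinates of that bit's chain. Your explicit checks --- that the deterministic copies leave the posterior ranking untouched and that restricting $\apr$ to $\mathbf{H}$ recovers the MAP explanation --- are exactly the details the paper leaves implicit.
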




Note that $m$ and $d$ are constants and not part of the input. If we would make them part of the input, then unary notation would be necessary as the number of nodes added would otherwise be exponential in the binary representation of $m$ and $d$. However, since $m \leq 2^n$ (where $n$ denotes the number of variables in \bn) the reduction would then not be polynomial in the \pmap\ instance any more. Hence, we require $m$ (and $d$) to be a constant.

\subsection{On membership in \nppp}
\label{On_membership_in_nppp}

As explicated before, \pmap\ has an \nppp-complete decision variant \cite{Park04}. Is it likely that appropriate decision variants of \spmap\ and \dpmap\ are also in \nppp? We can check in polynomial time, using the MAP explanation \opt\ and the approximation $\apr\in\can$, that the Hamming distance $d_H(\apr, \opt)$ is as promised. However, a verification algorithm for $(m,d)$-\spmap\ and $(m,d)$-\dpmap\ should also verify that \apr\ actually is within the $m$ best explanations, and finding the $m$-th best explanation is known to be \fppppp-complete \cite{Kwisthout11a}. Hence, it is unlikely that $(m,d)$-\spmap\ and $(m,d)$-\dpmap\ have an \nppp-complete decision variant.

\section{Conclusion}
\label{Conclusion}

In this short paper we elaborated on the computational complexity of finding MAP explanations that are almost-as-good as the most probable one (where `almost-as-good' was defined by rank) {\em and} that are sufficiently similar or dissimilar to the most probable explanation. We found that these problems are \nppp-hard, that is, not easier than the \pmap\ problem itself. An attempt to extend this proof construct to include value-approximation did not succeed, as the proof constructs really require that all explanations in $\apr\in\mopt$ have the same probability. In an extreme case it might be that we have just two explanations for a given \pmap\ problem, where $\cprob{h}{\mathbf{e}} = \sfrac{1}{2} + \epsilon$ and $\cprob{\neg h}{\mathbf{e}} = \sfrac{1}{2} - \epsilon$, and any non-uniform probability assignment to the additional variables in the explanation set will destroy the proof.

In future work we'd like to extend our results to value-approximation as well as rank-approximation. It might be relevant to investigate the complexity of these problems in constrained structures, such as polytrees, where the \pmap\ problem nonetheless remains \np-hard.

\bibliographystyle{splncs04}
\bibliography{Kwisthout}

\end{document}